\newtheorem{thm}{Theorem}
\newtheorem{lem}{Lemma}
\newtheorem{cor}{Corollary}
\def\F{\ensuremath{\mathbb{F}}\xspace}
\def\K{\ensuremath{\mathbb{K}}\xspace}
\def\GRS{\ensuremath{\mathrm{RS}}\xspace}
\def\code{\ensuremath{\mathcal{C}}\xspace}
\newcommand{\seq}[1]{\ensuremath{\llbracket #1 \rrbracket}\xspace}
\newcommand{\floor}[1]{\ensuremath{\left\lfloor #1 \right\rfloor}\xspace}
\def\mix{\ensuremath{\mathrm{Mix}}\xspace}
\def\RS{\GRS}
\def\e{\mathrm{e}}
\newcommand{\best}[1]{\textcolor{green!45!black}{#1}}
\def\BibTeX{{\rm B\kern-.05em{\sc i\kern-.025em b}\kern-.08em
    T\kern-.1667em\lower.7ex\hbox{E}\kern-.125emX}}
\title{Unraveling codes: fast, robust, beyond-bound error correction for DRAM}
\author{Mike Hamburg, Eric Linstadt, Danny Moore, Thomas Vogelsang
\\ Rambus Inc.
\\ \texttt{\{mhamburg,dmoore,elinstadt,tvogelsang\}@rambus.com}}
\newcommand\chipkill{CC\xspace}
\newcommand\doublechipkill{DCC\xspace}
\newcommand\bbck{BBCC\xspace}
\begin{document}
\maketitle
\thispagestyle{plain}
\pagestyle{plain}

\begin{abstract}
    Generalized Reed-Solomon (RS) codes are a common choice for efficient, reliable error correction in memory and communications systems.  These codes add $2t$ extra parity symbols to a block of memory, and can efficiently and reliably correct up to $t$ symbol errors in that block.  Decoding is possible beyond this bound, but it is imperfectly reliable and often computationally expensive.
    
    Beyond-bound decoding is an important problem to solve for error-correcting Dynamic Random Access Memory (DRAM).  These memories are often designed so that each access touches two extra memory devices, so that a failure in any one device can be corrected.  But system architectures increasingly require DRAM to store metadata in addition to user data.  When the metadata replaces parity data, a single-device failure is then beyond-bound.

    An error-correction system can either protect each access with a single RS code, or divide it into several segments protected with a shorter code, usually in an Interleaved Reed-Solomon (IRS) configuration.  The full-block RS approach is more reliable, both at correcting errors and at preventing silent data corruption (SDC).  The IRS option is faster, and is especially efficient at beyond-bound correction of single- or double-device failures.

    Here we describe a new family of ``unraveling'' Reed-Solomon codes that bridges the gap between these options.  Our codes are full-block generalized RS codes, but they can also be decoded using an IRS decoder.  As a result, they combine the speed and beyond-bound correction capabilities of interleaved codes with the robustness of full-block codes, including the ability of the latter to reliably correct failures across multiple devices.  We show that unraveling codes are an especially good fit for high-reliability DRAM error correction.
\end{abstract}

\section{Introduction}
Errors in computer memory have been a concern for decades, and this problem is more important now than ever.  Decreasing feature size leads to increased error rates per gigabyte per unit time \cite{meza2015revisiting},
and these rates must then be multiplied by the huge amount of memory in warehouse-size data centers \cite{barroso2019datacenter,beigi2023systematic,schroeder2009dram}.  Server-grade memory controllers mitigate this problem by storing extra information in memory for use in an error-correcting code (ECC).

Dynamic Random Access Memory (DRAM) is organized into modules, each containing multiple DRAM devices, several of which are accessed for each read or write operation.  Each access to the module is spread over parallel data lines (called DQ; typical devices have 4 DQ) and over time (burst length).  Most memory errors occur when a component of a DRAM device fails, either transiently or persistently, possibly affecting multiple bits of memory \cite{beigi2023systematic,cheng2022depth,li2022correctable,meza2015revisiting,schroeder2009dram}.  Devices are designed to limit the damage from failures: for example, DDR5 devices include on-chip single-bit error correction, and they are also designed so that a partial-row failure affects only one or two DQs (a \emph{bounded fault} \cite{criss2020improving}).  But some errors affect the entire device: for example, \cite[Table 1]{beigi2023systematic} reports that in DDR4 memory from two vendors, 9.6\% and 16.7\% of errors affected more than a partial row, possibly up to all the device's data for that access.  So it is important that ECC can correct a complete single-device failure.  This feature is known as \emph{single-device data correction} or by IBM's trademark \emph{Chipkill} \cite{dell1997white}; here we will call it \emph{Chipkill-correct} (\chipkill).  A decoder supporting \chipkill might also be capable of correcting errors across multiple devices.  Multi-device errors can be caused, for example, by read-disturb effects such as Rowhammer attacks \cite{kim2014flipping}.

Reed-Solomon codes \cite{reed1960polynomial} correct errors in units of multi-bit symbols.  Such symbol-oriented codes are a natural fit for DRAM's error patterns: symbols can be aligned with a DQ so that an error affecting multiple bits on one DQ can be corrected by a single-symbol correction \cite{gong2018duo, kim2015bamboo}.  A symbol-oriented code with $2t$ parity symbols can correct up to $t$ symbol errors.  According to the Komamiya-Singleton bound \cite{komamiya1953singleton}, it is not possible to correct more symbol errors with perfect reliability: error correction that attempts to do this is said to be \emph{beyond-bound}, and will always have a nonzero failure probability.\footnote{This bound assumes a symmetric channel model, meaning that any symbol value is equally likely to be corrupted into any other symbol value.  This is not strictly the case for DRAM, where depending on the implementation, certain bits are more likely to flip from 0 to 1 and others from 1 to 0.  If the memory controller knows which bits are which, a specialized code can exceed the Komamiya-Singleton bound \cite{9923862}.  However, these asymmetric codes are usually less efficient and less flexible than symmetric codes such as Reed-Solomon.}

To achieve \chipkill, a designer sets $t$ to the number of symbols sent by each device, necessitating two extra devices to store the $2t$ parity symbols.  This principle is used in typical configurations of both DDR4 and DDR5 server memory \cite{JESD79-4,JESD79-5}.  In DDR4 each operation accesses 18 devices, exchanging 4 bytes (4B) per device for a total of 72B.  In DDR5 each operation accesses 10 devices, exchanging 8B per device for a total of 80B.  Typically the user data is a 64B cache line in both cases, leaving 8B or 16B for parity respectively, and in both cases allowing for a full device to be corrected within-bound.  But more recently, e.g., in the Intel Skylake and Cascade Lake server generations \cite{li2022correctable}, some of the extra bits are used as metadata for other purposes than error correction: to store ownership, security information, and cache coherency state; to mark failed areas of DRAM, etc.  This leaves less room for parity, so that beyond-bound Chipkill-correct (\bbck) is required\footnote{Alternatively, extra parity symbols could be written elsewhere \cite{udipi2012lot,jian2013low}, but this costs performance and energy.}.  \bbck cannot be perfectly reliable, but it can correct the vast majority of single-device errors.

The goal of a more powerful decoding algorithm must be balanced against implementation constraints. The implementation needs to achieve low-latency encoding and decoding\footnote{Because memory errors are infrequent, it is possible to use a decoder which detects errors quickly but is slow to correct them.  However, this adds a complexity cost, since such a decoder can drop transactions and/or return them out of order.  Some systems must also maintain good performance even when one device develops a persistent fault.}, small area, and low power consumption.  Beyond-bound decoding is more complex than within-bound decoding, and some approaches rely in part on exhaustive search when errors cannot be decoded analytically \cite{gong2018duo,sudan1997decoding}.  Such a system might correct only single-device errors, or might combine this and within-bound correction of multi-device errors (e.g.\ up to all 4 DQ of a single device, or 3 DQ across multiple devices).  Or, by limiting a search to single-device errors only, the search only needs one computation per device, which may be done in parallel \cite{https://doi.org/10.48550/arxiv.2301.07271}. In higher-reliability systems, beyond-bound double-device correction may be required; in that case, searching for failed devices is quadratically more expensive.

Error-correcting codes should also be designed to reduce the probability of silent data corruption (SDC), which causes insidious application-level problems \cite{dixit2021silent}.  SDC can occur when there are more errors than the code can correct or reliably detect, and in attempting to correct the decoder instead introduces more errors.  
Codes with a larger symbol size and more parity symbols have a lower risk of SDC, because it is less likely for a randomly corrupted data block to be near a valid codeword, but these larger codes also have a higher implementation complexity.

\subsection{Outline and main contributions}

In \cref{concepts} we discuss beyond-bound decoding and interleaved Reed-Solomon codes, which form the basis of our work.  We then introduce our main contributions:
\begin{itemize}
    \item We propose a new type of Reed-Solomon code, which we call \emph{Unraveling Reed-Solomon} (URS) codes, to address the concerns described above.  These URS codes provide robust beyond-bound protection against single- or double-device failure with a very low probability that such a failure is uncorrectable, and can also correct multi-device errors with a low probability of miscorrection.
    \item We rigorously analyze URS codes in \cref{mathcontributions}, and show how to construct, encode and decode them.
    \item We show that they can be implemented with a complexity appropriate to their smaller symbol size (e.g.\ 8 bits), while having the SDC characteristics of a code with large symbol size (e.g.\ 16 or 32 bits).
    \item We implement a high-reliability URS encoder / decoder in a DDR5 memory controller, suitable for modern datacenters.  This is described in \cref{application}.
\end{itemize}
Finally, we compare our approach to other error correction techniques in \cref{comparison}.

\section{Concepts and Terminology} \label{concepts}

Error-correcting codes add redundant bits or symbols to the data to be protected, forming a
\emph{codeword}, and then can use this redundancy to correct a small
number of errors in that codeword.  We focus on generalized Reed-Solomon codes $\GRS(q; n,k)$, which are symbol-oriented linear block codes.  These codes encode $k$ data symbols from a $q$-element finite field $\F := \F_q$ as an $n$-element codeword.  Codewords are only the valid $n$-element sequences: they are a vector subspace of the space $\F^n$ of all $n$-element sequences, which we call \emph{blocks}.  We describe the code as its set of codewords; a given code might support several different encoding and decoding algorithms.

The types of errors that are expected to occur are described by a \emph{channel model}.  In DRAM, single-bit errors are common, as are errors that affect several bits sharing a single element in the circuit such as a partial or complete row or column \cite{beigi2023systematic}.  These errors may affect multiple symbols, much like a burst error model, but where the bursts tend to occur in \emph{aligned} groups: for example, affecting multiple adjacent bits within a DQ or multiple adjacent DQs within a chip.  We therefore suggest organizing the codeword so that each DQ transmits one or more symbols, as with Bamboo codes \cite{kim2015bamboo}.  For simplicity, some of our calculations instead assume a \emph{symmetric} channel, in which any symbol value is equally likely to be corrupted into any other symbol value.

Usually the decoder does not know which symbols have errors.  But some serious DRAM failure modes are persistent, so the decoder might know (for example) that a certain chip's data is likely corrupted.  This condition is called an \emph{erasure} rather than an error, and is easier to correct than an error.  RS codes can correct combinations of errors and erasures.


\subsection{Beyond-bound decoding}

Each $(n,k)$ code \code has some minimum distance of $d \leq n-k+1$ symbols between its codewords.  With symmetric channels, each decoding algorithm has some threshold
$$t\leq \lfloor (d-1)/2 \rfloor\leq \lfloor (n-k)/2 \rfloor$$
which is the maximum number of symbol errors that it can reliably decode.  In other words, to correct $t$ symbol errors with perfect reliability,
at least $2t$ symbols must be added \cite{komamiya1953singleton}.  Here we will take $t := \lfloor (d-1)/2\rfloor$ unless otherwise indicated.

Nonetheless, a \emph{beyond-bound decoder} can be designed to correct some number $u>t$
of symbol errors.  Such a decoder must fail to uniquely correct some patterns of $u$ symbol errors, because
some blocks will be within distance $u$ of at least two codewords in $\mathcal{C}$, but the failure probability might be quite low.  

If more than $t$ errors have occurred, then the decoder might miscorrect if there is a valid codeword within distance $t$, causing SDC. Decoding more errors typically implies a higher miscorrection probability, simply because more potential ways to correct also means more potential ways to miscorrect.  In some applications, we might even want to throw out ``successful'' corrections if the channel model considers them unlikely error patterns, to reduce the chance of SDC. 

\subsection{Generalized Reed-Solomon codes}

Reed-Solomon codes \cite{reed1960polynomial} are the best known family of symbol-oriented block codes.
Different authors present these codes in different ways.  We adopt the following \emph{dual} view of Reed-Solomon codes, which is equivalent to the usual presentation up to a change in the multipliers.

Let $k<n\leq q$ be positive integers, and let $\F := \F_q$ be the finite field with $q$ elements.  A \emph{Generalized Reed-Solomon code} $\code\in\GRS(q; n,k)$ is additionally parameterized by a sequence of $n$
distinct labels $\seq{\alpha_i}\in \F^n$,
and by nonzero multipliers $\seq{m_i}\in\F^n$, both numbered from $i=1$ to $n$.  The code $\code$ is a subspace of $\F^n$: its codewords are the kernel of
the \emph{syndrome matrix} $S$, which is the product of an $(n-k)\times n$ \emph{Vandermonde matrix} and an $n\times n$ diagonal matrix:
\begin{align*}
V_{(n-k)\times n, \seq{\alpha_i}} &:= \left(\begin{array}{ccc}
1 & \ldots & 1 \\
\alpha_1 & \ldots & \alpha_n \\
\vdots & \ddots & \vdots \\
\alpha_1^{n-k-1} & \ldots & \alpha_n^{n-k-1}
\end{array}\right),
\\\\D_{\seq{m_i}} &:= \left(\begin{array}{ccc}
m_1 &  &  \\
&  \ddots & \\
 &  & m_n
\end{array}\right),
\\\\\text{Syndrome matrix}\ S &:= V_{(n-k)\times n, \seq{\alpha_i}} \cdot D_{\seq{m_i}}
\\\text{RS code}\ \code &:= \ker S.
\end{align*}
The conventional choice is codes which are \emph{narrow-sense}, meaning that  $m_i = \alpha_i$, and
\emph{primitive}, meaning that $\alpha_i = \alpha^i$ for some generator $\alpha$ of $\F^\ast$.
Our URS codes are not primitive. 
The multipliers can be $m_i$ can be changed by scaling each symbol after encoding and before decoding; without loss of generality we set $m_i:=1$ so that we can omit them.

For fixed multipliers $\seq{m_i}$, the labels $\seq{\alpha_i}$ can be adjusted by any invertible affine
transform over $\F$: that is, replacing $\seq{\alpha_i}$ with $\seq{b\alpha_i+c}$ for any $b,c\in\F$ with
$b\neq 0$ gives the same code\footnote{This turns out to be a special case of unraveling with $\ell=1$.}.  So the labels are not uniquely
determined by the set of codewords.  This  observation can make correction more
efficient by allowing us to optimize the syndrome calculation -- for example, by choosing $\alpha_1=0$
and $\alpha_2=1$.

\subsubsection{The key equation}

\GRS codes have distance $d=n-k+1$, and support efficient algorithms to decode up to the threshold $t = \lfloor (n-k)/2 \rfloor$ errors.
The most well-known family of decoding algorithms descends from Peterson-Gorenstein-Zierler \cite{gorenstein1960two}.  This approach
takes as input a block $B\in\F^n$ and begins by computing a \emph{syndrome} $\sigma := S\cdot B\in \F^{n-k}$.
By definition the syndrome is zero if and only if $B\in\code$.  The syndrome is treated as a polynomial $$\sigma(x) := \sum_{i=0}^{n-k-1} \sigma_i x^i.$$
The hardest step of decoding is locating the errors.  To
determine the set $E$ of error locations, the decoder searches for an \emph{error locator polynomial} $\Lambda$ (or \emph{locator} for short), whose degree $e$ equals the number of  errors, putatively of the form
$$\Lambda(x) := \prod_{i\in E} (x-\alpha_{i})$$
which encodes those error locations as its roots.  We also use the reversed locator, which has the same coefficients as $\Lambda$ but in reverse order: \footnote{Different texts name either $\Lambda$ or $\bar\Lambda$ as the locator. Our choice avoids the issue that when $0\in E$, the reversed locator $\bar\Lambda$ has degree only $e-1$.} 
$$\bar\Lambda(x) := \prod_{i\in E} (1-x\alpha_{i}).$$ Only a subset of possible locators will be consistent with the observed syndrome $\sigma$.  The locator must satisfy a \emph{key equation}:
$$\Omega(x) :\equiv \bar\Lambda(x) \cdot \sigma(x)\ \ \text{mod}\ x^{n-k}$$
where $\Omega$ must have degree less than $e$.
For any given error count $e$, the key equation is affine over $\F$, with $e$ variables and $n-k-e$
constraints.
It can therefore be solved with a generic linear equation solver such as Gaussian elimination, but a dedicated approach
such as Berlekamp-Massey is usually more efficient \cite{berlekamp1966non,massey1969shift}.

Once a candidate error locator polynomial has been determined, the error positions $\alpha_i$
are found by solving for its roots using a method such as Chien search \cite{chien1964cyclic}, and the error magnitudes
can be found using methods such as Forney's algorithm \cite{forney1965decoding}.  If too many errors have occurred, the
key equation might have no solution, or might produce an incorrect locator.  Usually an incorrect locator does not split into distinct roots over the symbol labels $\{\alpha_i\}$, which allows the decoder to detect that an uncorrectable error has occurred.

The main obstacle to extending these decoders beyond bound is that, when $e > \lfloor (n-k)/2 \rfloor$,
the key equation becomes underdetermined: with $n-k-e$ constraints on $e$ unknowns, it generically has $|\F|^{2e+k-n}$ solutions.  We can decode by enumerating these candidate error locator polynomials, determining the roots of each one until a locator is found that splits over the symbol labels $\{\alpha_i\}$, but this takes exponential time in $2e+k-n$.  For sufficiently low-rate codes (with $n \geq 3k$), instead Sudan's algorithm \cite{sudan1997decoding} can extend beyond-bound algebraically, but we are interested in a high-rate setting.

\subsection{Interleaved Reed-Solomon codes}

When errors are anticipated to occur in bursts, it is efficient to form a larger code by interleaving codewords of several
smaller ``row'' codes \cite{bleichenbacher2003decoding}.   The interleaved codewords can be corrected either independently, or jointly to better decode burst errors that affect similar positions in each row.  


\begin{figure}[tb]
\begin{center}
\begin{tikzpicture}
  \tikzstyle{Ravel} = [draw=black,   minimum width=1cm, minimum height=1.4cm, node distance=1cm]
  \tikzstyle{Text} = [draw=none, align=center, minimum width=3cm, minimum height=0.7cm]
  \tikzstyle{MiniText} = [draw=none, align=center, minimum width=0.3cm, minimum height=0.6cm, node distance=0.5cm]
  \tikzstyle{Unravel} = [draw=black, minimum width=1cm, minimum height=0.6cm,node distance=1cm]
  \tikzstyle{Dots} = [draw=black, minimum width=1.5cm, minimum height=1cm, node distance=1.25cm]
  \tikzstyle{Underbrace} = [draw=black,  decoration={brace, mirror, raise=0.9cm},  decorate]
  \tikzstyle{Leftbrace} = [draw=black,  decoration={brace, mirror, raise=0.1cm},  decorate]

  \node[Unravel,name=a01]{$\alpha_{1}$};
  \node[Unravel,name=a02,right of=a01]{$\alpha_{2}$};
  \node[Dots,name=a0dots,right of=a02,minimum height=0.6cm]{$\cdots$};
  \node[Unravel,name=a0k,right of=a0dots,node distance=1.25cm]{$\alpha_{k}$};
  
  \node[Unravel,name=a11,below of=a01, node distance=0.8cm]{$\alpha_{1}$};
  \node[Unravel,name=a12,right of=a11]{$\alpha_{2}$};
  \node[Dots,name=a1dots,right of=a12,minimum height=0.6cm]{$\cdots$};
  \node[Unravel,name=a1k,right of=a1dots,node distance=1.25cm]{$\alpha_{k}$};
  \node[Unravel,name=a21,below of=a11, node distance=0.8cm]{$\alpha_{1}$};
  \node[Unravel,name=a22,right of=a21]{$\alpha_{2}$};
  \node[Dots,name=a2dots,right of=a22,minimum height=0.6cm]{$\cdots$};
  \node[Unravel,name=a2k,right of=a2dots,node distance=1.25cm]{$\alpha_{k}$};
  
  \node[MiniText,name=c0,left of=a01,node distance=1cm]{$\code_0$:};
  \node[MiniText,name=c1,left of=a11,node distance=1cm]{$\code_1$:};
  \node[MiniText,name=c2,left of=a21,node distance=1cm]{$\code_2$:};
  
  \draw[rounded corners,red!60!black] ([shift={(-0.33cm,0.2cm)}] a01.north) rectangle ([shift={(0.3cm,-0.2cm)}] a21.south) {};
  
  \node[Text,below of = a21,node distance=0.9cm,minimum height=0cm]{$\Lambda(x) = x - \alpha_1$};
  
  \draw [Leftbrace] (c0.north west) -- (c2.south west);
  
  \node[Text,name=t1,left of=a11,node distance=2.6cm]{
  	Interleaved\\ codes with \\ the same\\ labels.
};
\end{tikzpicture}
\caption{Interleaving with $\ell=3$.  The $\alpha_i$ in a code position is the label of that symbol, not its value.  A column error's locator must satisfy the key equation of all $\ell$ subcodes.}\label{fig:interleave}
\end{center}
\end{figure}

An \emph{interleaved Reed-Solomon} (IRS) code has $\ell$ interleaved row codes $\code_j$, all of which are RS codes with the same length $n$ and same labels $\seq{\alpha_i}$
but perhaps with different data sizes $k_j$.  This forms a larger code \code with parameters $(N,K) := (\ell n, \sum k_j)$.  We write this as
$$\code = \code_1\times \cdots \times \code_\ell$$
or as $\code = \code_1^\ell$ for interleaving $\ell$ copies of the same code $\code_1$.
If we imagine stacking the codewords on top of each other, then each label $\alpha_i$ is
assigned to a column, as shown in \Cref{fig:interleave}.

IRS codes support beyond-bound collaborative decoding when the channel model favors errors affecting
the $\ell$ symbols in a single column at once (e.g.\ failures affecting a channel, a device, etc.)\footnote{Primitive IRS codes can also correct certain unaligned burst errors beyond-bound, but this is more complex, and we will focus on aligned errors.}.  The simple version of
this algorithm corrects $e$
column errors for a total of up to $E := \ell e$ symbol errors, by observing that a single degree-$e$ locator
must simultaneously satisfy the key equations of all the codes.  This gives us $N-K-E$ affine constraints in $e$ unknowns.  Such a system is likely to have a
unique solution when $e \leq \lfloor (N-K)/(\ell+1)\rfloor$, but past the Komamiya-Singleton bound, uniqueness cannot be guaranteed, so decoding is probabilistic.  Solutions can be found by solving these linear equations, or more efficiently using a multiple shift register synthesis algorithm \cite{krachkovsky1998decoding}.

Decoding can be extended slightly further using more complex techniques such as
power decoding \cite{puchinger2017decoding}.  This enables decoding probabilistically out to a radius of
$$R(n,k,\ell) := \floor{n\cdot \left(1 - \left(k/n\right)^{{\ell}/{(\ell+1)}}\right)}$$
columns in polynomial time.  This decoder is too slow to be used in memory systems, but is interesting in other contexts.

An interleaved code has minimum distance equal to the minimum of its row codes.  This means that if an error affects more than $\lfloor(n-k)/2\rfloor$ symbols in only one row (whether due to some effect that tends to cause single-row errors, or just by chance), then the decoder might fail to correct or even miscorrect the error.

A low-rate Reed-Solomon code (i.e.\ one with $n$ many times larger than $k$) can be transformed into an IRS code
 to allow beyond-bound decoding \cite{schmidt2010syndrome}.  Our work also transforms a non-interleaved RS code into an IRS code, but in high-rate setting and using a completely different technique.
 
\subsection{Motivating application: DDR5 RAM, \chipkill and \doublechipkill}

We will take error-correcting DDR5 DRAM as our motivating example.  In DDR5, for each memory read operation, each of the 10 chips being read transmits 2B on each of its 4 DQs, for a total of 8B per chip.  These 80B are used to store 64B of cache line data and 16B of additional data.  So we could use, for example, an $\GRS(2^8; 80, 64)$ code, which can correct up to eight 1B symbol errors and thus achieves \chipkill.  

Now suppose that the memory system also needs to store 1B of metadata alongside the 64B of user data.  This replaces parity check symbols, giving $\GRS(2^{8}; 80, 65)$, which can only correct 7 errors and so doesn't support within-bound \chipkill: a single device failure is beyond its decoding radius.  We therefore want a beyond-bound decoder which can correct single-device errors with high probability, but still has a low probability of miscorrection.  For example, we could try erasure decoding for each device in sequence until it succeeds.  With 10 devices, this might take up to 10 tries, either sequentially or in parallel.

IRS codes could be used to achieve more efficient \bbck, for example by interleaving three $\GRS(2^8; 20, 16)$ codewords and one $\GRS(2^8; 20, 17)$ codeword.  These codes can correct 2 errors and 1 error respectively.  Since decoding cost is roughly quadratic, decoding these four quarter-sized codes requires of about $1/4$ the effort of a full 7- or 8-error correcting code, and decoding can start on the first row before the others have been received.  Beyond-bound decoding these codes is also fast, but it only works if the error pattern affects multiple rows: if by chance two errors affect the last row and nothing else, then they would be uncorrectable even if they are both on one device.  These RS and IRS options are shown in \Cref{fig:bitdiagram}.

\begin{figure}
\begin{tikzpicture}[x=1.6mm,y=1.4mm]
    \draw[rounded corners,gray,very thin,fill=gray!20!white] (-0.65,-0.5) rectangle +(50,18);
\foreach \j in {0,1} {
    \foreach \k in {0,...,3} {
        \foreach \i in {0,...,7} {
            \draw [very thin, draw=black, fill=blue!40!white, step=1,shift={(0.15,0)}] (\i*5+\k,9*\j) grid  +(0.7,8) rectangle +(0,0);
        }
        \foreach \i in {8,...,9} {
            \draw [very thin, draw=black, fill=red!40!white, step=1,shift={(0.15,0)}] (\i*5+\k,9*\j) grid  +(0.7,8) rectangle +(0,0);
        }
    }
    \foreach \k in {0,1} {
        \foreach \i in {3} {
            \draw [very thin, draw=black, fill=blue!60!white, step=1,shift={(0.15,0)}] (\i*5+\k,9*\j) grid  +(0.7,8) rectangle +(0,0);
        }
    }
}

\draw [very thin, draw=black, fill=white!80!black!60!green, step=1,shift={(0.15,0)}] (8*5,0) grid  +(0.7,8) rectangle +(0,0);
\draw [<->] (-1.5,0) -- (-1.5,17) node[midway,above,rotate=90,align=center]{\footnotesize Burst length 16 bits};
\draw [<->] (0,-1.5) -- +(49,0) node[midway,below]{\footnotesize 10 devices};
\draw [<->] (0,-3) -- +(4,0) node[midway,below]{\footnotesize device};
\def\vstride{6.5}
\def\voff{30}
\foreach \k 
[evaluate=\k as \kk using {int(40)}]
    in {0,...,3} {
    \draw[rounded corners,gray,very thin,fill=gray!\kk!white!70!white] (-0.5,\k*\vstride-0.5-\voff) rectangle +(50,5);
    \foreach \j in {0,...,1} {
        \foreach \i in {0,...,7} {
            \draw [very thin, draw=black, fill=blue!\kk!white, step=1, shift={(0,\k*\vstride)}] (\i*5+\j*2+0.2,-\voff) grid  +(1.6,4) rectangle +(0,0);
        }
        \foreach \i in {8,...,9} {
            \draw [very thin, draw=black, fill=red!\kk!white, step=1, shift={(0,\k*\vstride)}] (\i*5+\j*2+0.2,-\voff) grid  +(1.6,4) rectangle +(0,0);
        }
    }
    
    \foreach \j in {0} {
        \foreach \i in {3} {
            \draw [very thin, draw=black, fill=blue!60!white, step=1, shift={(0,\k*\vstride)}] (\i*5+\j*2+0.2,-\voff) grid  +(1.6,4) rectangle +(0,0);
        }
    }
}
\draw [->,blue!40!black] (16,-0.5) -- (16,-6) node[midway,above,rotate=90] {\footnotesize Mix};
\draw [very thin, draw=black, fill=white!80!black!60!green, step=1] (8*5+0*2+0.2,0*5-\voff) grid  +(1.6,4) rectangle +(0,0);
\end{tikzpicture}
\caption{Two possible assignments of bits to RS symbols in DDR5.  User data in \textcolor{white!40!black!60!blue}{blue}, metadata in \textcolor{white!20!black!60!green}{green}, parity in \textcolor{white!40!black!60!red}{red}.
\\\\
Above, a cache line is encoded as a single $\GRS(2^8; 80,65)$ codeword.  Below, it instead uses $\ell=4$ interleaved codewords, the first three from $\GRS(2^8; 20,16)$ and the last from $\GRS(2^8; 20,17)$.  The interleaved code is faster and has straightforward \bbck, but it is more vulnerable to uncorrectable errors or miscorrection.
\\\\
A codeword for the above arrangement can be unraveled to the below one, enabling \bbck without the increased risk of miscorrection.  Since $\ell=4$, the unraveling map operates on groups of four symbols at a time, such as pairs of columns as highlighted in \textcolor{blue!80!black}{saturated blue}.  Perhaps surprisingly, this works even for the group with one metadata symbol and three parities.
}
\label{fig:bitdiagram}
\end{figure}

IRS codes are fast but unreliable: each row code can only correct 1-2 errors, so multi-chip errors may be uncorrectable even if only a few symbols are affected.  Worse, they can miscorrect such errors.  Multi-chip errors are less common than single-chip errors, but not vanishingly rare.  For example, Rowhammer and other read-disturb vulnerabilities can cause sparse errors across multiple chips.  Some implementations, such as AMD's implementation of chipkill \cite{amdchipkill}, mitigate miscorrection using a matching-columns restriction, which requires that any correction affects only a small number of total columns.  This is effective when many rows are corrupted, but not against sparser errors: if only a single row makes a correction, then the matching-columns restriction cannot determine whether it miscorrected.

In the opposite direction, we might be interested in minimizing SDC instead of maximizing speed.  For this, we might instead use a code over a larger field, such as $\GRS(2^{16}; 40, 33)$.  Such a code can only correct three DQ errors, but its miscorrection probability is approximately 8 orders of magnitude lower than $\GRS(2^{8}; 80, 65)$ and 11 orders of magnitude lower than an interleaving of four 8-bit RS codes.

Similar options are available for other memory systems.  For example, DDR4 modules might use an $\GRS(2^8; 72, 64)$ code without metadata.  This system has 4 symbols per device and so can support within-bound \chipkill, but storing metadata again requires \bbck.

More conservative systems require even more powerful error correction, such as double-Chipkill-correct (\doublechipkill) which can tolerate the failure of more than one device.  This can be achieved, for example, using a DDR5 module in lockstep mode, where 128B of user data and 32B of parity are stored across 20 devices instead of 10. Beyond-bound \doublechipkill is more complex: for example, searching through all double-device failures would require trying $\binom{20}{2} = 190$ combinations.

Chip-granularity erasure decoding is another useful feature for high-reliability DRAM systems: if the system detects that a device has failed (e.g.\ because it has already produced many errors) then its data can be treated as erased.  This economizes error-correction symbols: each erasure only requires one parity symbol to correct, compared to errors which each require two parity symbols.

\subsection{Our contribution: Unraveling Reed-Solomon codes}

\begin{figure}[t]
\begin{center}
\begin{tikzpicture}
  \tikzstyle{Ravel} = [draw=black,   minimum width=0.75cm, minimum height=1.4cm, node distance=0.75cm]
  \tikzstyle{Text} = [draw=none, align=center, minimum width=3cm, minimum height=2cm, node distance=2cm]
  \tikzstyle{Unravel} = [draw=black, minimum width=1.5cm, minimum height=0.7cm,node distance=1.5cm]
  \tikzstyle{Decode} = [draw=black, fill=blue!20!white, rounded corners, minimum width=2cm, minimum height=0.7cm,node distance=2cm]
  \tikzstyle{Dots} = [draw=black, minimum width=1.2cm, minimum height=1cm, node distance=1.35cm]
  \tikzstyle{Underbrace} = [draw=black,  decoration={brace, mirror, raise=0.9cm},  decorate]
  \tikzstyle{Leftbrace} = [draw=black,  decoration={brace, mirror, raise=0.15cm},  decorate]
  \tikzstyle{Rightbrace} = [draw=black,  decoration={brace, raise=0.2cm},  decorate]
  
  \node[Ravel,name=b11]{$\beta_{11}$};
  \node[Ravel,name=b12,right of=b11]{$\beta_{12}$};
  \node[Ravel,name=b21,right of=b12]{$\beta_{21}$};
  \node[Ravel,name=b22,right of=b21]{$\beta_{22}$};
  \node[Dots,name=bdots,right of=b22, node distance=0.975cm,minimum height=1.4cm]{$\cdots$};
  \node[Ravel,name=bk1,right of=bdots,node distance=0.975cm]{$\beta_{n1}$};
  \node[Ravel,name=bk2,right of=bk1]{$\beta_{n2}$};
  
  \node[Unravel,name=a11,below of=b11,node distance=2.1cm,xshift=0.375cm]{$\alpha_{1}$};
  \node[Unravel,name=a12,right of=a11]{$\alpha_{2}$};
  \node[Dots,name=a1dots,right of=a12,minimum height=0.7cm]{$\cdots$};
  \node[Unravel,name=a1k,right of=a1dots,node distance=1.35cm]{$\alpha_{n}$};
  \node[Unravel,name=a21,below of=a11, node distance=1cm]{$\alpha_{1}$};
  \node[Unravel,name=a22,right of=a21]{$\alpha_{2}$};
  \node[Dots,name=a2dots,right of=a22,minimum height=0.7cm]{$\cdots$};
  \node[Unravel,name=a2k,right of=a2dots,node distance=1.35cm]{$\alpha_{n}$};
  
  \draw [Leftbrace]  (b11.north west) -- (b11.south west);
  \draw [Leftbrace]  (a11.north west) -- (a21.south west);
  
  \node[Text,name=t1,left of=b11]{
  	Unraveling code\\
	$\code\in\GRS(q; 2n, 2k)$
  };
  \node[Text,name=t1,left of=a11,node distance=2.4cm,yshift=-0.45cm]{
  	Interleaved codes\\
	$\code_0 \times \code_0$ where\\$\code_0\in\GRS(q; n, k)$
};
  
  \draw [Underbrace] ([xshift=1mm] b11.west) -- ([xshift=-1mm] b12.east) node[midway,anchor=north,yshift=-0.8cm,name=u1]{};
  \node [name=v1,below=-1mm of a11.north] {};
  \draw [->] (u1) -- (v1) node[midway,anchor=west] {$\mix_1$};
  
  \draw [Underbrace] ([xshift=1mm] b21.west) -- ([xshift=-1mm] b22.east) node[midway,anchor=north,yshift=-0.8cm,name=u2]{};
  \node [name=v2,below=-1mm of a12.north] {};
  \draw [->] (u2) -- (v2) node[midway,anchor=west] {$\mix_2$};

  \draw [Underbrace] ([xshift=1mm] bk1.west) -- ([xshift=-1mm] bk2.east) node[midway,anchor=north,yshift=-0.8cm,name=uk]{};
  \node [name=vk,below=-1mm of a1k.north] {};
  \draw [->] (uk) -- (vk) node[midway,anchor=west] {$\mix_k$};
\end{tikzpicture}
\caption{Unraveling with $\ell=2$.  The $\alpha$ or $\beta$ in the code position is the label of that symbol.  The single $\GRS(2n,2k)$ code can be decoded directly, reliably correcting up to $n-k$ errors in any pattern. Or, it can be converted into two interleaved $\GRS(n,k)$ codes with the same rate.  These can probabilistically decode up $\lfloor\frac43(n-k)\rfloor$ errors if they affect at most $\lfloor\frac23(n-k)\rfloor$ columns, or slightly more with advanced decoding algorithms \cite{puchinger2017decoding}.}\label{fig:unravel}
\end{center}
\end{figure}
 

\begin{figure*}
\centering
\begin{threeparttable}
\begin{tabular}{l|c|cc|cc|c|ccc|c}
 & & \multicolumn{2}{c|}{Correctable\tnote{a}} & \multicolumn{2}{c|}{Approx Pr[miscorrect]\tnote{b}} & Analytic & \multicolumn{3}{c|}{Pr[\bbck uncorrectable]\tnote{d}} & Rel.
\\ Code & Distance & reliable & best & dense & sparse & \bbck\tnote{c} & dense & sparse & weight & area\tnote{e}
\\\hline
    RS & $\best{15}$ & \best{7} & \best{7} & $4.41\e{-8}$ & $4.41\e{-8}$ & no & $3.20\e{-14}$ & $3.20\e{-14}$ & \best{7} & 184\%
\\ URS & $\best{15}$ & \best{7} &\best{7}  & $4.41\e{-8}$ & $4.41\e{-8}$ & \best{yes} & \best{$3.55\e{-15}$} & \best{$3.55\e{-15}$} & \best{7} & 161\%
\\\hline
RS over $\F_{2^{16}}$& $\best{\leq15}$ & 3 & 6  & $\best{3.61\e{-14}}$ & \best{$3.61\e{-14}$} & no & $3.20\e{-14}$ & $3.20\e{-14}$ & \best{$\leq7$} & 174\%
\\ IRS $\ell=2$ & 8 & 3 & 6  & $\best{3.61\e{-14}}$ & $1.44\e{-7}$ & \best{yes} & $3.20\e{-14}$ & $4.66\e{-10}$ & 4 & 106\%
\\ \tnote{f}\ \ URS $\ell=2$ & $\best{15}$ & 3 & 6  & $\best{3.61\e{-14}}$ & \best{$3.61\e{-14}$} & \best{yes} & \best{$3.55\e{-15}$} & \best{$3.55\e{-15}$} & \best{7} & 100\%
\\\hline
 IRS $\ell=4$ & 3 & 1 & 6 & \best{$3.55\e{-14}$} & $3.21\e{-5}$ & \best{yes} & $3.20\e{-14}$ & $7.14\%$ & 2 & 38\%
\\ IRS $\ell=8$ & 2 & 0 & \chipkill & \best{$3.55\e{-14}$} & $3.67\e{-3}$ & \best{yes} & \best{$3.55\e{-15}$} & $25\%$ & 1 & \best{30\%}
\\ \tnote{f}\ \ URS $\ell=8$ & \best{15} & 1 & \chipkill & $\best{3.55\e{-14}}$ & $\best{3.55\e{-14}}$ & \best{yes} & \best{$3.55\e{-15}$} & \best{$3.55\e{-15}$} & \best{7} & 34\%
\end{tabular}
\begin{tablenotes}\footnotesize
\item[a] General byte errors corrected within bound: most that can be corrected either reliably or in the best case (column-aligned).  The codes marked ``\chipkill'' can correct up to 6 bytes within-bound, but they all have to be on the same device, so the codes' \bbck reliability is more relevant.
\item[b] Probability that the code miscorrects a large error with metadata and \bbck enabled.  Dense is if many bytes are corrupted; sparse is if (worst-case) $e$ random bytes are corrupted.  The floor of $3.55\e{-14}$ is the fraction of blocks within a single-device distance of a codeword.
\item[c] Does the algorithm support beyond-bound \chipkill without iterating over all devices?
\item[d] Probability that beyond-bound correction fails, returning uncorrectable, if either the whole device or worst-case $e$ of its bytes are corrupted at random.  Weight is the minimum number of bytes that must be corrupted to cause a failure.
\item[e] Estimated relative area for a single-cycle-throughput core.  See the main text for the many assumptions of this calculation.
\item[f] This is the same URS code, but decoded after unraveling at $\ell=2$ or $\ell=8$, so errors are corrected in 16- or 64-bit chunks instead of bytes.  This corrects fewer errors but is faster and smaller, and has a lower miscorrection probability.  URS with $\ell=4$ is omitted because in the most straightforward configuration it is too similar to URS with $\ell=8$.
\end{tablenotes}
\end{threeparttable}
\caption{Concrete application to error-correcting $\times 4$ DDR5, 512-bit user data and 16-bit metadata, and \bbck.  The best or almost-best values are highlighted in \best{green}.  Uses 8-bit symbols unless stated.  The IRS codes use the matching columns heuristic to reduce miscorrection rate.  The URS code is the best in all categories, except for miscorrection probability against dense errors.  This is only because it can correct more error patterns than the IRS code.}
\label{tab:comparison2}
\end{figure*}

The worst case for IRS codes is when errors are sparse, and the best case --- both for \bbck and for reducing SDC --- is when errors affect an entire column at once.  We observe that errors can be shifted to this pattern by applying a mixing operation to each column (using e.g.\ matrices over $\F$) after IRS encoding, so that the data is stored in memory using a different code $\code$.  When decoding, we would apply the inverse mixing operation, thereby ``unraveling'' \code back into the IRS code $\code_\text{IRS}$.  This converts each symbol error in \code to a column error in $\code_\text{IRS}$, so the total number of symbol errors may increase.  But the number of columns with errors stays the same, so this technique corrects the same pattern of errors when the matching-columns restriction is used.

Mixing can reduce the miscorrection probability, but only if the mixing functions are chosen carefully: for example, if every column is mixed in the same $\F$-linear way, then the code is still weak.  Mixing by itself also does not help to correct multi-device errors, only to detect them.

Our work hinges on a surprising observation: for carefully selected row codes and mixing matrices, as shown in Sections~\ref{subsec:construction}-\ref{subsec:g}, the full-block code \code is itself a generalized Reed-Solomon code.  This guarantees that \code has the maximum possible distance, and gives a decoding algorithm that works well for multi-device errors.  We call such a code \code an \emph{Unraveling Reed-Solomon} (URS) code.  We show this relationship in \Cref{fig:unravel}.


It is often possible to unravel a URS code $\code$ in more than one way.  In particular, over a binary field $\F_{2^b}$ it is possible to unravel the same code into $\ell=2^c$ rows for any $c \leq b$, allowing several decoding algorithms with different characteristics.  For larger $\ell$, independent decoding is faster and collaborative decoding stretches further beyond bound, but success requires the errors to have greater alignment.  This matches the design constraints of error-correcting DRAM, since there are different failure probabilities for bits, DQs, pairs of DQs, and devices.  Unraveling also interacts well with erasures at column granularity, such as devices which are known to have failed: each erased column in the larger code simply becomes an erased symbol in the smaller codes.

We initially designed URS codes to meet the design goals of a hardware decoder for a datacenter memory product.  Our DDR5 decoder is described in \cref{application}.  It supports up to 16 bits of metadata. It can correct up to 4 DQ errors within-bound (3 DQ with metadata) across multiple chips by unraveling to an interleaving of $\GRS(2^8; 40,32)$ and/or $\GRS(2^8; 40,33)$, so it almost is as fast as these codes.  But its robustness is much greater, comparable to a code over a 16-bit field which would be about twice as costly.

Our decoder also supports fast \bbck by unraveling to an IRS code with $\ell=8$.  IRS codes ordinarily cannot correct multi-chip errors and are vulnerable to sparse errors, so they would not have met our reliability goals, but the URS code removes both downsides.  With a conventional RS code, we would instead need to try decoding with each of the 10 chips erased, which would cost a large factor in either area or worst-case throughput, and would also have 9$\times$ the failure probability.  Using a URS code avoids this cost without any significant disadvantage.  Our decoder also supports correcting two DQ errors (one with metedata) after erasing any single device, and this erasure is more efficient than it would be with a conventional RS code.

Unraveling also gives an efficient recursive encoder, similar to a fast Fourier transform (FFT) or additive fast Fourier transform (AFFT \cite{gao2010additive}).  This Fourier-like calculation can also be used in the search step instead of Chien search.  Chien search cannot be used without modification since it applies only to primitive codes, but URS codes are not primitive.

We also show a second decoding algorithm that can correct more complex patterns of errors, though it is somewhat less efficient.  For example, this decoder can perform beyond-bound decoding of errors affecting one device plus any other single symbol (\bbck+1).

We compare URS codes to IRS codes and to other \GRS codes in \Cref{tab:comparison2}.  This table includes an estimate of relative area for the core, compared to a core implementing URS with $\ell=2$.  There are many trade-offs in decoder architecture which have a large impact on area and performance.  For consistency we assumed a low-latency single-cycle-throughput core.  This consists of a single-cycle syndrome; independent-per-row key equation solvers using either explicit formulas (for $e\leq 2$) or two iterations per cycle of the inversionless Euclidean algorithm; root search and error magnitude solved using static matrices as in \cite{katayama2000one}; and parallel division and error correction (when $e > 2$; for $e=2$ the division is common and for $e=1$ none is required).  We assumed that the core can be configured with or without metadata, and that the core supports \bbck with metadata either analytically (when possible) or by computing Forney syndromes with a large matrix and checking whether they are zero.  We did not factor the syndrome or polynomial-evaluation matrices based on (Additive) Fast Fourier Transforms: this can reduce the size of a URS decoder, and may also be possible with other decoders depending on parameters.  The area includes data buffering, but not small components such as error counting and tag tracking.

Since URS codes are a special type of generalized \GRS code, they can correct at least the same patterns of errors, plus they support efficient beyond-bound decoding of column errors.  Since URS codes met our design goals better than previously known codes, we would expect them to be applicable to other DRAM designs, and possibly beyond memory applications.  For higher-reliability DRAM requirements such as \doublechipkill, URS codes have an even greater advantage.

However, URS codes require careful  choices of the symbol labels and mixing matrices.  As far as we know, these only exist for some $(\F,n,\ell)$.  Fortunately, when $\ell=2$, URS codes exist over any finite field $\F_q$ for every $N=2n\leq q$.

Compared to IRS codes, URS codes are able to decode sparse and multi-device errors, and have no special vulnerability to miscorrecting them.  But since they are full-block codes, URS codes must wait until all the data is received before they can detect or correct errors.

\section{Mathematical Contributions} \label{mathcontributions}

\subsection{Construction of Unraveling Reed-Solomon codes}\label{subsec:construction}

Now we will show how to construct URS codes of order $\ell$, with parameters $(N,K) := (\ell n, \ell k+a)$, with data width $k<n$ and a remainder $0\leq a < \ell$.  Choose a polynomial $G(x)$ of degree $\ell$, and a sequence $\seq{\alpha_i}$ of $n$ distinct elements in $\F$ such that for each $\alpha_i$, the polynomial $G-\alpha_i$ splits into $\ell$ distinct roots $\seq{\beta_{ij}}$ over $\F$.  The unraveling code $\code\in\GRS(q;N,K)$ has labels $\seq{\beta_{ij}}$ (with $i$ being the column number) and multipliers $m_{ij}=1$.

For the row codes, let $\code_0 \in\GRS(q; n,k)$ and $\code_1\in\GRS(q; n,k+1)$ have labels $\seq{\alpha_i}$ and multipliers $m_{i}=1$, and
let $\code_0^{\ell-a} \times \code_1^a$
be an interleaving of these codes.  The \emph{unraveling} map
$$\text{unravel}: \code \to \code_0^{\ell-a} \times \code_1^a$$
is a linear map which acts on each column $i$ by the $\ell\times\ell$ Vandermonde matrix $$\mix_i := V_{\ell \times \ell : \seq{\beta_{ij}}}.$$
Each $\mix_i$ is a square Vandermonde matrix with distinct labels, so it is invertible.  Since the codes $\code$ and $\code_0^{\ell-a} \times \code_1^a$ are the same size, the unraveling map is a 1-to-1 map between them.  So to encode to $\code$, we can encode segments of the
data to each $\code_j$ and then apply $\mix^{-1}_i$ to each column $i$.


\begin{thm}\label{thm:unravel}
The map $\mathrm{unravel}$ indeed maps $\code\to \code_0^{\ell-a} \times \code_1^a$.
\end{thm}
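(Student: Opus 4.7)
The plan is to verify the interleaved-code membership one row at a time by turning each required syndrome equation for $\mathrm{unravel}(c)$ into a syndrome equation of the original URS codeword $c$. The pivotal ingredient is the relation $\alpha_i = G(\beta_{ij})$ for every $j$, which lets us trade each power of $\alpha_i$ for a degree-$\ell$ polynomial in $\beta_{ij}$.

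First I would unpack the unravel map: writing $c_{ij}$ for the entry of $c$ in column $i$ and row $j$, the Vandermonde map $\mix_i = V_{\ell \times \ell, \seq{\beta_{ij}}}$ sends column $i$ to the vector whose $r$-th entry (for $r = 0, 1, \ldots, \ell - 1$) is $d_{ri} := \sum_{j=1}^{\ell} \beta_{ij}^r \, c_{ij}$. The $r$-th row of the unraveled block is thus $(d_{ri})_{i=1}^n$, and I would assign rows $r = 0, \ldots, \ell - a - 1$ to $\code_0$ and rows $r = \ell - a, \ldots, \ell - 1$ to $\code_1$.

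With this assignment, the row-$r$ membership condition becomes $\sum_i \alpha_i^s d_{ri} = 0$ for $s = 0, \ldots, n-k-1$ if $r \le \ell - a - 1$, and for $s = 0, \ldots, n-k-2$ otherwise. Substituting $d_{ri}$ and replacing $\alpha_i^s$ by $G(\beta_{ij})^s$ (which is independent of $j$), each such sum equals
\[
\sum_{i,j} G(\beta_{ij})^s \, \beta_{ij}^r \, c_{ij} \;=\; \sum_{i,j} P_{s,r}(\beta_{ij}) \, c_{ij},
\]
where $P_{s,r}(x) := G(x)^s x^r$ has degree $s\ell + r$. Because $c \in \code$ has labels $\seq{\beta_{ij}}$ with trivial multipliers, its defining syndromes state that $\sum_{i,j} \beta_{ij}^t c_{ij} = 0$ for every $t \le N - K - 1 = \ell(n-k) - a - 1$, and therefore $\sum_{i,j} Q(\beta_{ij}) c_{ij} = 0$ for every polynomial $Q$ of degree at most $N - K - 1$.

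The real content, and the step I expect to be the main obstacle, is the degree bookkeeping that ensures every $P_{s,r}$ falls within this bound. For the $\code_0$ rows the binding case is $(s,r) = (n-k-1, \ell - a - 1)$, giving $\deg P_{s,r} = \ell(n-k) - a - 1 = N - K - 1$, exactly saturating the bound. For the $\code_1$ rows the binding case is $(s,r) = (n-k-2, \ell - 1)$, giving $\deg P_{s,r} = \ell(n-k) - \ell - 1$, which is strictly below $N - K - 1$ since $a < \ell$. This is precisely why the split must use $\ell - a$ copies of $\code_0$ and $a$ copies of $\code_1$, and why $G$ is taken of degree exactly $\ell$: any other choice would either leave some required syndrome outside the guaranteed span or waste dimension. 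Once this degree check is in place, every required interleaved syndrome vanishes, so $\mathrm{unravel}(c) \in \code_0^{\ell - a} \times \code_1^{a}$ as claimed.
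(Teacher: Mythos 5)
Your proof is correct and takes essentially the same route as the paper's: both expand the unraveled row symbols, substitute $\alpha_i = G(\beta_{ij})$ to turn each required row syndrome into $\sum_{i,j} G(\beta_{ij})^s\beta_{ij}^r\,c_{ij}$, and observe that this polynomial in $\beta_{ij}$ has degree at most $N-K-1$, so it vanishes by linearity from the defining syndromes of $\code$. Your explicit check of the two binding cases $(n-k-1,\ell-a-1)$ and $(n-k-2,\ell-1)$ is just a direct form of the paper's inequality $\ell(k_h+1)\geq K+h+1$.
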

\begin{proof}
The proof is similar to  Taylor expansion at $G$ \cite[Section II]{gao2010additive}.
A codeword $C:=\seq{C_{ij}}\in\F^N$ of $\code$ satisfies the syndrome equation:
$$
\forall m\in\{0,\ldots, N-K-1\}:\ \ 
\sum_{j=1}^{\ell}
\sum_{i=1}^{n} C_{ij}\cdot\beta_{ij}^m = 0.
$$
After unraveling, the $h$'th interleaved codeword has symbols $$U_{ih} := \sum_{j=1}^{\ell} C_{ij} \cdot \beta_{ij}^h.$$
Let $k_h := k$ if $h < \ell - a$ and $k_h := k+1$ otherwise; by case analysis, $\ell (k_h+1) \geq K+h+1$.
We want to show that for each $h$, the row $\seq{U_{ih}}$ is an $\GRS(q; n,k_h)$ codeword with labels $\seq{\alpha_i}$, meaning that:
$$
\forall m\in\{0,\ldots, n-k_h-1\}:\ \ 
\sum_{i=1}^{n} U_{ih}\cdot\alpha_i^m
 \stackrel?= 0.
$$
Expanding $U_{ih} = \sum_{j=1}^{\ell} C_{ij} \cdot \beta_{ij}^h$ and $\alpha_i = G(\beta_{ij})$, this is equivalent to:
$$
\forall m\in\{0,\ldots, n-k_h-1\}:\ \ 
\sum_{j=1}^{\ell} 
\sum_{i=1}^{n} C_{ij} \cdot \beta_{ij}^h\cdot G(\beta_{ij})^m
 \stackrel?= 0.
$$
Now $\beta_{ij}^h\cdot G(\beta_{ij})^m$ is a polynomial in $\beta_{ij}$ of the form $\sum_f g_{ijf} \beta_{ij}^f$ with degree $h + \ell m$, and this degree is bounded by:
\begin{align*}
h + \ell m
& \leq h + \ell (n-k_h-1)
\\ & = h + N - \ell (k_h+1)
\\ & \leq h + N - (K+h+1)
\\ & = N - K - 1.
\end{align*}
Therefore
\begin{align*}
\sum_{j=1}^{\ell} 
\sum_{i=1}^{n} C_{ij} \!\cdot\! \beta_{ij}^h \!\cdot\! G(\beta_{ij})^m
&= 
\sum_{f=0}^{N-K-1}
\sum_{j=1}^{\ell} 
\sum_{i=1}^{n} C_{ij} \cdot g_{ijf} \cdot\beta_{ij}^f
\\&= 
\sum_{f=0}^{N-K-1}
 g_{ijf}\!\cdot\!\left(
\sum_{j=1}^{\ell} 
\sum_{i=1}^{n} C_{ij}  \cdot\beta_{ij}^f
\right)
\\ &= 0
\end{align*}
since, for codewords of $\code$, all the inner terms are all zero by definition.  This completes the proof.
\end{proof}

\subsection{Possible values of $G$}\label{subsec:g}

How should we instantiate $G$ of degree $\ell$, and choose labels $\seq{\alpha_i}$ so that $G(x)-\alpha_i$ splits over $\F$?  This seems like a rather strong requirement, and whether it is possible depends on $\F, n$ and $\ell$.  It is sufficient, but not necessary, for $G(x)$ to be an $\ell$-to-1 map over $\F$ or  $\F\backslash\{0\}$, and then the $\seq{\alpha_i}$ can be simply be chosen among the images of $G$.  Call such a map \emph{collapsing}.  We describe two simple families of collapsing maps.

First, if $\ell$ divides $q-1$, then $G(x) := x^\ell$ is obviously collapsing.  Note that if $\ell$ is composite, then the same code can also be unravelled by any factor of $\ell$.

Alternatively, we can construct collapsing maps when $\ell$ is a power of the characteristic of $\F$, using similar techniques to the AFFT (see e.g.\ \cite[Section III-IV]{gao2010additive}, which deals with the $\F_{2^b}$ case only). Suppose $\F$ is an extension of a field $\K$ --- for example, $\F=\F_{2^b}$ and $\K=\F_2$ --- and let $W$ be a vector subspace of $\F$ over $\K$.  Thus $|W|$ is a power of $|\K|$ and of the characteristic of \F.  Let $$G_W(x) := \prod_{w\in W} (x-w).$$
This is a collapsing function, as shown in the following lemma and corollary:
\begin{lem}
Despite being a polynomial of degree greater than 1, $G_W$ is also a $\K$-linear function.
\end{lem}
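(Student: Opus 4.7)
The plan is to check the two defining properties of $\K$-linearity separately: additivity $G_W(x+y) = G_W(x) + G_W(y)$ and $\K$-homogeneity $G_W(\lambda x) = \lambda G_W(x)$ for $\lambda \in \K$. In each case I will view a polynomial identity in one variable and show that its degree is strictly less than the number of roots it must have, forcing it to vanish.

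For additivity, the key observation is that $W$ is closed under translation by its own elements, so for any fixed $w_0 \in W$ the substitution $w \mapsto w - w_0$ permutes $W$ and yields $G_W(x+w_0) = \prod_{w \in W}(x + w_0 - w) = G_W(x)$. Combined with $G_W(w_0) = 0$, this shows that $Q(x,y) := G_W(x+y) - G_W(x) - G_W(y)$ vanishes identically in $x$ whenever $y = w_0 \in W$. Next I treat $Q$ as a polynomial in $y$ over $\F[x]$: since $G_W$ is monic of degree $|W|$, the leading $y^{|W|}$ terms of $G_W(x+y)$ and $G_W(y)$ both equal $y^{|W|}$ and cancel, so $\deg_y Q \leq |W| - 1$. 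But $Q$ has at least $|W|$ roots in $y$, so $Q \equiv 0$.

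For $\K$-homogeneity, I fix $\lambda \in \K$ and consider $R(x) := G_W(\lambda x) - \lambda G_W(x)$. Here the critical arithmetic fact is $\lambda^{|W|} = \lambda$, which follows from Fermat's little theorem $\lambda^{|\K|} = \lambda$ by iteration, using that $|W|$ is a power of $|\K|$. This forces the leading $x^{|W|}$ terms of $G_W(\lambda x)$ and $\lambda G_W(x)$ to agree, so $\deg R \leq |W| - 1$. Since $W$ is a $\K$-subspace, $\lambda w \in W$ for every $w \in W$, giving $R(w) = 0 - 0 = 0$ and hence $|W|$ roots; so $R \equiv 0$.

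The only places the hypotheses are truly consumed are the two degree reductions: additivity uses that $W$ is an additive subgroup of $\F$, while homogeneity uses both that $|W|$ is a power of $|\K|$ (so $\lambda^{|W|} = \lambda$) and that $W$ is $\K$-stable. Everything else is just root counting, which I expect to be routine rather than the main obstacle.
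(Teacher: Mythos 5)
Your proof is correct and follows essentially the same route as the paper: additivity is established exactly as in the paper, via translation-invariance of $G_W$ under $W$ together with a degree-versus-root-count argument applied to $G_W(x+y)-G_W(x)-G_W(y)$. The only difference is that you also prove homogeneity by root counting, whereas the paper reindexes the product directly as $\prod_{w\in W}(ax-w)=a^{|W|}\prod_{w\in W}(x-w/a)=aG_W(x)$; both versions rest on the same two facts ($W$ is $\K$-stable and $\lambda^{|W|}=\lambda$), so this is a cosmetic variation rather than a different approach.
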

\begin{cor}
The kernel of $G_W$ is exactly $W$, so it is $|W|$-to-one.  Its degree is also $|W|$, so it is collapsing.
\end{cor}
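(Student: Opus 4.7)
My plan is to establish the three assertions of the corollary in turn --- that $\ker G_W = W$, that $G_W$ is $|W|$-to-one, and that $\deg G_W = |W|$ --- and then to observe that these jointly match the definition of a collapsing map given immediately before.

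For the kernel computation, I would read off $W \subseteq \ker G_W$ directly from the factored form $G_W(x) = \prod_{w \in W}(x - w)$: substituting any $w \in W$ for $x$ zeroes one of the linear factors, so $G_W(w) = 0$. For the reverse inclusion, suppose $G_W(a) = 0$ for some $a \in \F$; since \F is a field (hence an integral domain), one of the factors must vanish, i.e.\ $a - w = 0$ for some $w \in W$, giving $a \in W$. Thus $\ker G_W = W$ exactly.

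Next I would invoke the preceding lemma, which asserts that $G_W$ is a $\K$-linear map $\F \to \F$. For any $\K$-linear map, the nonempty fibers $G_W^{-1}(y)$ are cosets of the kernel, and all cosets of a finite subgroup have the same cardinality as the subgroup; combined with $\ker G_W = W$, this immediately gives that every nonempty fiber has exactly $|W|$ elements. That is precisely the $|W|$-to-one property claimed.

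Finally, the degree assertion is visible by inspection of the factored form: $G_W$ is a product of $|W|$ distinct monic linear polynomials, so $\deg G_W = |W|$. With degree $\ell := |W|$ and $\ell$-to-one behavior on its image, $G_W$ satisfies the definition of a collapsing map given above. The only substantive ingredient is the $\K$-linearity supplied by the previous lemma; the corollary itself is essentially packaging that fact with elementary polynomial bookkeeping and coset counting, so I do not expect any real obstacle here.
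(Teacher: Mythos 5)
Your proposal is correct and follows essentially the same route as the paper, which leaves the corollary's justification implicit: membership of $W$ in the kernel is read off the factored form, the reverse inclusion and the $|W|$-to-one property come from the integral-domain/degree bound together with the $\K$-linearity of $G_W$ established in the preceding lemma, and the degree count is by inspection. No gaps.
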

\begin{proof}
We begin by noting that if $w\in W$ then $G_W(x+w) = G_W(x)$ by rewriting the indices.  We want to show that $G_W(x)$ is a $\K$-linear function, meaning that for all $a\in\K$ and all $x$ and $y$ in $\F$:
$$G_W(ax) \stackrel?= aG_W(x)\ \ \ \ \text{and}\ \ \ \ G_W(x+y) \stackrel?= G_W(x) + G_W(y).$$
The first part is trivial for $a=0$.  For nonzero $a\in\K$ we see that:
\begin{align*}
G(ax) &= \prod_{w\in W} (ax-w) = \prod_{w\in W} a(x-w/a) \\&= \prod_{w\in W} a(x-w) = a^{|W|} G(x)  = a G(x) 
\end{align*}
by Lagrange's theorem.  To prove additivity, consider the polynomial $$H_y(x) := G(x+y) - G(x) - G(y),$$ which we wish to show is zero.  The leading $x$-term cancels out, so $H_y$ has degree (in $x$) strictly less than $|W|$.  But now, for each $w\in W$,
$$H_y(w) = G(w+y) - G(w) - G(y) = G(y) - 0 - G(y) = 0.$$
Since $H_y$ has at least $|W|$ roots but degree less than $|W|$, it must be the zero polynomial.  This completes the proof.
\end{proof}
Thus we can easily construct collapsing $G$ of any degree $\ell\leq|\F|$ that is a power of the characteristic of $\F$.  We also notice that the roots of $G_W-\alpha_i$ are a $W$-coset in $\F$.  It follows that for any subspace $V\subset W$, the map $G_V$ is a $|V|$-to-1 map on those roots.  Therefore, a code constructed using $G_W$ can be unravelled using $G_V$ for any subspace $V$ of $W$.

A particularly nice case arises when $q = 2^{a\cdot 2^b}$, and $\ell = 2^c \leq 2^b$.  In that case, $\F$ has $\K := \F_{2^{2^b}}$ as a subfield.  Let $G_2(x) := G_{\F_2}(x) = x^2 + x$; then it is easily seen that $G_2^b = G_\K$, so it is collapsing, and therefore $G_2^{b'}$ is collapsing for all $b'\leq b$.  These maps $G_2^{b'}$ are sparse maps of degree $2^{2^{b'}}$ with coefficients in $\{0,1\}$, so they are especially efficient to implement.  For example, $G_2^2(x) = x^4 + x$, and is an $\F_4$-linear map.

For every prime power $q$, either $q$ is a power of 2, or else $q-1$ is divisible by 2.  So for every finite field $\F_q$ it is possible to construct a URS code with $\ell=2$.

It is also possible to use a non-collapsing $G$, though the resulting code will have a shorter maximum length.  For example, consider $\F_{2^8}$ and $\ell=6$: the map $G(x) := (x^2-x)^3 - \alpha$ splits for 21 distinct $\alpha$, so it can be used to construct $\mathrm{URS}(N,K)$ with length $N \leq 21\cdot 6 = 126$ symbols.

\subsubsection{Connection with the (A)FFT}
The syndrome matrix of $\code$ is an $(N-K)\times N$ Vandermonde matrix $V$.  We have shown that it has the same kernel $\code$ as $\ell$ interleaved $(n-k)\times n$ Vandermonde matrices times a blockwise mixing matrix; call this $WB$.  Therefore they are row-equivalent, meaning that $V=MWB$ for some invertible matrix $M$.  This matrix $M$ converts from the syndrome of a block with respect to the interleaved codes, to the syndrome with respect to \code.  Because an $(n,k)$ code is a subcode of an $(n,k')$ code for any $k'\leq k$, the matrix $M$ is lower-triangular.  Its coefficients come from the powers of $G$ according to \Cref{thm:unravel}, so as long as $G$'s coefficients are zero-one, so are $M$'s coefficients.

This decomposition is an application of the (Additive) Fast Fourier Transform decomposition, generalized to any polynomial $G$ with the right root structure.  The code \code can be considered as the set of codewords $C$ whose Fourier transform $V\cdot C$ begins with $n-k$ zeros.  The (A)FFT decomposes this transform as $V=MWB$, where $M$ is lower-triangular, $W$ is $\ell$ interleaved smaller Fourier transforms, and $B$ is a layer of ``butterfly'' transforms.  Thus $BC$ is in the IRS code if and only if $WBC$ begins with $n-k$ zeros.  Since $M$ is invertible and lower triangular, this is the same as $VC=MWBC$ beginning with $n-k$ zeros.

The (A)FFT can also be used efficiently to solve the key equation and to find roots of the locator \cite{7543456,lin2014novel,tang2022new,kadir2023efficient}.  The algorithms for solving the key equation are likely no better than Berlekamp-Massey for the values of $n-k$ seen in memory systems, but the benefits for finding roots of the locator may apply.  These techniques are more efficient when the labels $\alpha_i$ of the code are numbered in accordance with the (A)FFT being used, which will typically be the case for unraveling codes.

\subsection{Decoding Unraveling Reed-Solomon Codes}

There are multiple ways to view $\code$, each with its own decoding algorithm. A design might use only one of these decoders, or they can be run either simultaneously or sequentially, perhaps starting with the fastest one.  If multiple decoders are used, then the syndrome need not be recalculated for each one: it can be translated using the matrix $M$.

The most obvious solution is to use the direct decoder: $\code$ is an \GRS code, so it can be decoded as usual out to distance $\lfloor (N-K)/2 \rfloor$.

\subsection{Independently decoding the unraveled codes}
$\code$ can also be decoded by unraveling the codeword to $\ell$ interleaved codes, and then
decoding them independently.  To avoid miscorrections,
solutions should be rejected if they show errors in more than $\lfloor (N-K)/(2\ell)\rfloor$ different columns.  Reed-Solomon decoding has roughly quadratic cost $\Theta(n\cdot (n-k))$ for small codes.  Since an independent decoding algorithm reduces this to $\ell$ copies of $\ell$-times smaller codes, this decoding algorithm is about $\ell$ times faster than general decoding of $\code$.  Reducing the error threshold may also allow faster few-error algorithms to be used, such as Morioka-Katayama \cite{katayama2000one}.

This decoder corrects exactly the same errors that an RS code over $\F_{q^\ell}$ would correct.  This is a strict subset of the errors that independently decoding an IRS code would provide (since IRS would allow different columns in each row), which is a strict subset of the errors that \code can decode as a full-block RS code (since IRS limits the number of errors per row).

The practical SDC rate of a code depends heavily on the set of errors it can correct and on its minimum distance.  Since this URS decoder corrects the same errors that the $\F_{q^\ell}$ code can correct, and it has at least as high a minimum distance (it is maximum distance separable over $\F_q$ and not just over columns), it should have as low an SDC rate as using larger symbols.  The independent IRS decoder could be restricted to the same number of column errors, but it has a much smaller minimum distance, and thus a worse SDC rate.

\subsection{Beyond-bound decoding the IRS code}
After unraveling, we can also apply beyond-bound collaborative decoding to the IRS code \cite{bleichenbacher2003decoding,krachkovsky1998decoding,DBLP:journals/corr/abs-cs-0610074,puchinger2017decoding}.  Note that mixing the columns may increase the number of symbol errors, but it doesn't increase the number of column errors, which is what limits collaborative decoding.  On the contrary, mixing improves collaborative decoding: realistic channels may produce sparse errors, e.g.\ errors affecting only one of the interleaved codewords, which is a weak point of collaborative decoding.  But after mixing, the errors usually will not be sparse.

\subsubsection{Completely unraveling the code}\label{subsubsec:bbck}

In the limit, depending on the field, we may be able to unravel to IRS codes with distances as low as 2 and~3, which correct only one error\footnote{Distance-4 codes unravel to IRS codes of distance~2 and~3, so we can correct these after unraveling further if the field permits.  However, implementations using shorter codes sometimes correct fewer symbols than the half-distance bound, to reduce the probability of miscorrection.  For these use cases we might stop at distance 4, which provides single-error-correct double-error-detect (symbol SECDED) capability.  The same techniques apply in that case.}.  At least one distance-3 code is required to correct.  In our DRAM example, we get the distance-3 code $\code_0\in\GRS(2^8; 10,8)$ and the distance-2 code $\code_1\in\GRS(2^8; 10,9)$.  Distance-2 codes cannot locate errors, so decoding in this case is beyond-bound: they must be decoded collaboratively with distance-3 codes.  But decoding these codes for single-column errors is fast and easy.

Once the syndrome of a distance-3 block is computed as $\sigma(x) := c_1 x+c_0$, the error magnitude must be $c_0$ and its location must be $\alpha_i = c_1/c_0$.
The syndrome of a distance-2 block is the simple parity $\sigma(x) := c_0$ equal to the sum of the error magnitudes, so if one error occurs and its location is known then decoding is trivial.  So decoding the IRS code $\code_0^{\ell-a}\times \code_1^a$ succeeds if and only if:
\begin{itemize}
\item All row codes of distance 3 or greater must report either the same valid error location $\alpha_i$ or no errors, and
\item If at least one of the distance-2 rows reports an error, then at least one distance-3 row must also report an error, so that the location is known.
\end{itemize}
To decode, note that only one field division is needed: the decoder can choose a nonzero syndrome to calculate the location, and then check that the other syndromes are all proportional to it using cross-multiplication.  The code can be decoded in a streaming fashion if desired, starting with the distance-3 rows.

If a failure corrupts a single column, decoding will fail only if all the distance-3 syndromes are zero.  If the corruption is uniformly random among nonzero values, this happens with probability $$(q^{-(N-K-\ell)} - q^{-\ell}) / (1-q^{-\ell}) < q^{-(N-K-\ell)},$$ since after mixing, the errors must be zero in $N-K-\ell$ locations but not all $\ell$ locations (since then there is no error).  Failures of this type are always a detected uncorrectable error, never a miscorrection.  Similarly, the miscorrection probability for uniform multi-column errors is at most $(N/\ell)\cdot q^{-(N-K-\ell)}$.  In the example of DDR-5 with 8-bit metadata, we have $N/\ell=10$ and $N-K-\ell=7$, so the failure and miscorrection probabilities are at most $2^{-56}$ and $10\cdot 2^{-56}$ respectively, which are almost negligible.  Decoding failure requires at least $N-K-\ell+1$ symbols to be corrupted (in this example, all 8 symbols on the chip) so it cannot happen with sparse errors.  This technique is applicable with even more metadata; for example, with 4 bytes of metadata, the same bounds are $2^{-32}$ and $10\cdot 2^{-32}$ respectively, and at least 5 byte errors are required to cause failure.

\subsubsection{Multiple unraveling}
The same code $\code$ can be unraveled with multiple different values of $\ell$.  For our DDR5 example, suppose we have
8 bits of metadata, and so have constructed a code $\code\in\GRS(2^8; 80,65)$ that unravels up to $\ell=8$ ways.  We can directly decode \code to correct up to 7 single-byte failures.  Or we can unravel it at $\ell=2$ to decode at most $\lfloor 15/(2+1)\rfloor = 5$ channel failures affecting up to 10 bytes; at $\ell=4$ to decode at most $\lfloor 15/(4+1)\rfloor = 3$ double-channel bounded faults affecting at most 12 bytes; or at $\ell=8$ to decode at most one single-chip failure affecting up to 8 bytes.  This last option corrects fewer errors than $\ell=2$ or $\ell=4$, but it's much faster and has a much lower failure probability.

In fact, if $\code$ is constructed using $G_W$ as described in \cref{subsec:g}, then it can be unraveled along any subspace of $W$: that is, for the same $\ell$ it can be unravelled in more than one way, to catch errors that occur in different patterns.  For example, in DDR5, each chip has 4 channels which each transmit 16 bits (which we interpret as two eight-bit symbols) per burst.  We might construct an 8-element $W := W_\text{time} \times W_\text{chan}$.  We normally unravel along $W_\text{time}$ to search for errors that affect both the first and second byte from each of several channels.  But we might also be concerned that interference or a power supply glitch could corrupt all the symbols coming from one or more devices at a given time.  Unraveling the same code $\code$ along $W_\text{chan}$ instead searches for errors with this sort of burst structure.

\subsubsection{Alternative view: stereotyped locators}\label{sec:plusone}

An alternative view of the decoding algorithm is to work directly on the code $\code$ without converting it into an IRS code, but to consider locators which are polynomials in $G(x)$ instead of just in $x$.  Polynomials of degree $e$ in $G(x)$ are a linear subspace of the set $P$ of polynomials of degree $e\cdot\ell$ in $x$.  This decoding algorithm is equivalent to collaboratively decoding the row codes by finding a locator that satisfies all their key equations, but it might be less efficient since it isn't as clear how to apply multiple shift register synthesis.

However, it has a further generalization, in that other patterns of errors may also confine the locator to a subspace of $P$.  For example, consider a ``chip failure plus one byte'' error, affecting the 8 bytes on one chip plus any 9th byte.  This can be searched for as a locator of the form $(G(x)-\alpha_i)\cdot(x-\beta_{j})$, where $G(x)$ is a fixed polynomial of degree 8 
--- e.g.\ $G(x) = x^8 + x^4 + x^2 + x$.  Such locators have only 2 nonlinear degrees of freedom ($\alpha_i$ and $\beta_{j}$), and they are confined to an affine subspace of dimension 3 of the full 9-dimensional affine space of monic 9th-degree polynomials, parameterized by $\{\alpha, \beta,\alpha\beta\}$.  Therefore the locator may be found with linear algebra if there are $N-K \geq 9+3=12$ syndrome symbols, whereas 9 general errors would normally require 18 syndrome symbols.

Similarly, ``chip failure plus two bytes'' errors have 3 nonlinear degrees of freedom, and generate locators in a 5-dimensional affine subspace of monic degree-10 polynomials, so they can be found probabilistically with linear algebra if $N-K \geq 15$.

The stereotyping of locators is also useful for erasure correction.  Suppose we have a $\mathrm{URS}(2^8; 40,32)$ code, and we want to provide 1- or 2-error correction after erasing a column of 4 symbols.  Erasing a column of symbols requires multiplying the syndrome by the erasure locator.  With an unraveling code, this locator has a sparse form such as $x^4 + x + \alpha_i$,  and multiplying the syndrome by such a polynomial is much more efficient than a general polynomial multiplication.

\subsubsection{Shortening the code}

Like any code, URS codes can be \emph{shortened}, reducing the number of redundant bits or symbols by fixing some of them to zero.  This is useful when no unraveled code exists over the right field.  So, for example, we might want a long $\F_{2^8}$ code with 7-symbol columns, but (as far as we know) no such unraveling codes exists. Instead 8-symbol columns could be used with the last symbol in the column fixed to 0.  However, unraveling with $\ell=8$ gives only $(N-K)/8$ parity symbols per row instead of $(N-K)/7$.

Likewise, individual bits of a symbol can be fixed to zero, though this makes encoding slightly more complex, requiring a bit matrix instead of a symbol matrix.  This is useful when the number of available bits per column is not a multiple of the desired symbol size.  With $r$ redundant bits over $\F_{2^b}$, this allows 
$N-K = \lfloor r/2^b \rfloor$.

\section{Application in a DDR5 memory controller} \label{application}

\begin{figure*}[t]\centering
\begin{tabular}{r|ccc|ccc|c}
Metadata bits & RS code & Unraveled code & DQ corr. & \bbck code & \chipkill DUE & Weight & Random SDC \\\hline
0 & $(80,64)$ & $(40,32)^2$ & 4 & N/A & 0 & N/A & $4.95\e{-15}$\\
8 & $(80,65)$ & $(40,32)\!\times\!(40,33)$ & 3 & $(10,8)^7\!\times\!(10,9)^{\phantom{2}}$ & $1.39\e{-17}$ & 8 & $1.41\e{-16}$\\
16 & $(80,66)$ & $(40,33)^2$ & 3 & $(10,8)^6\!\times\!(10,9)^2$& $3.55\e{-15}$ & 7 & $3.61\e{-14}$
\end{tabular}
\caption{High-reliability core parameters (all with $q=2^8$) and failure probabilities for different amounts of metadata.  \bbck code: the code shape after unraveling for \bbck.  With no metadata, this is unused because the 4DQ error correction can already reliably correct single-device errors.  \chipkill DUE: the proportion of single-device errors that are uncorrectable.  Weight: the number of bytes on a single device (out of 8) that must be corrupted before the error can be uncorrectable.  Random SDC: the proportion of multi-device errors (corrupting many symbols across many devices) that result in miscorrection and thus silent data corruption.}\label{tab:chipkill}
\end{figure*}

We implemented this technique in a high-throughput, extra-high-reliability 10-device DDR5 error-correction circuit as part of a datacenter memory controller.  This section reports on a preliminary version of the circuit, and should not be considered a product specification.  Our decoder supports additional data integrity features: the datapath is protected with an additional error detection code to mitigate environmental faults in the buffered data during decoding.  To protect against environmental faults in the decoder's calculation, a second circuit can check that the corrected codeword really is a URS codeword and is within the prescribed distance from the buffered data.

We chose unraveling codes because of their very low SDC rate, similar to 16-bit codes but requiring only 8-bit field arithmetic; and to enable efficient \bbck.

Our circuit processes 512-bit data with up to 16 bits of metadata; the amount of metadata is assumed to be set per memory region, so the controller is informed of the amount used in each read.  The code parameters are shown in \cref{tab:chipkill}, along with failure rates as calculated in \cref{subsubsec:bbck} and SDC rates as calculated according to McEliece-Swanson \cite{mceliece1986decoder}.  When no metadata is used, any 4 DQ errors can be corrected.  When metadata is used, only 3 DQ errors or (probabilistically) the 4 DQs on one device can be corrected.  When the number of metadata bits is not a multiple of 8, the remaining bits in the byte function as a checksum to reduce the SDC rate.  Our overall failure and SDC probabilities are extremely low, since the listed probabilities are multiplicative with the rate at which serious DRAM error occur in the first place.

All cases are handled by a unified decoder.  Our decoder does not attempt to decode the large code with 7 or 8 independent byte errors.  Instead it always unravels the code, so that it can decode 3 or 4 errors at a 16-bit granularity.  Even though the $\RS(2^8; 80,k)$ structure is not used directly, it gives robustness because its distance is 15 to 17 bytes, depending on the amount of metadata.

The decoder also supports marking one device as erased (independently per read), in which case it can still correct one DQ error.  Without metadata, this mode can optionally correct 2 DQ errors, but with a much higher probability of miscorrection ($1.47\e{-7}$).

Our decoder is fully pipelined, with a latency of 1 cycle when there is no error, or 5 cycles if there are errors, with single-cycle throughput (64B/cycle) in either case.  We synthesized our design in
TSMC's 6nm process technology
with a target frequency of 1 GHz in a nominal-to-worst-case corner, to match DDR5-8000's throughput of 64 GB/s.  Based on raw synthesis results and conservative place-and-route estimates, synthesized area is roughly equivalent to 254k NAND2 gates, including the data buffer.  The data integrity features (to protect against hardware faults in the decoder, not ordinary uncorrectable errors) are included in the area, but not in the latency because they overlap with the next stage downstream.

\section{Comparison to Other Error Correction Techniques} \label{comparison}

\subsection{Comparison to MUSE-ECC}
In this work we have been assuming a symmetric channel, but in real DRAMs, some errors favor one direction or the other.  If this bias is large enough, the memory controller can use it to help error correction.  MUSE-ECC's residue codes \cite{9923862} implement error correction based on integer arithmetic instead of $\F_{2^n}$ arithmetic, enabling asymmetric codes.  This includes an $(80,67)$ code (measured in bits) that can correct any $1\!\!\!\to\!\!\!0$ 8-bit symbol error.  This exceeds the Komamiya-Singleton bound, which applies only to symmtric channels.

But asymmetric codes are not as applicable in modern DRAMs.  Energy-saving changes reduce the directional bias of errors, and DDR5's on-die ECC can miscorrect, adding an extra bit-flip in either direction.  MUSE-ECC might be adaptable to this model, but this remains to be seen.  We would also like further reliability features such as erasure decoding.

MUSE-ECC also allows symmetric codes with shorter symbols than Reed-Solomon would permit, but since our work focuses on reliability we prefer longer symbols .

\subsection{Comparison to other large-codeword techniques}

Large-codeword error correction generally features low probabilities of silent data corruption, at the cost of greater decoding latency.  We view our work as complementary, rather than competitive, to other large-codeword Reed-Solomon works such as DUO \cite{gong2018duo} and Bamboo ECC \cite{kim2015bamboo}.  These focus mostly on codeword layout and not the mathematics of the correction process, so a new RS code with an improved correction algorithm directly complements them.

Bamboo ECC proposes encoding an entire 64B cache line in a single codeword.  This greatly reduces the SDC rate and increases error correction capability compared to IRS codes, at the cost of a more complex decoder and needing to wait for the entire cache line before decoding.  Switching to a URS code would reduce the cost of the decoder, but not the need to wait for the whole cache line.  For example, with a DDR4 $\RS(72,64)$ code, Bamboo ECC proposes limiting the number of independent errors corrected by the code to e.g.\ either one or two DQ, or a single $\times 4$ device, to reduce SDC.  URS codes directly address this case, by correcting a single $\times 4$ device at a lower cost than general 4-error RS decoding.  The improvement is even more pronounced in \doublechipkill applications, whether the device failures are sequential or may be simultaneous, because these become 2-symbol correction and/or erasure operations instead of 8-symbol operations.

DUO ECC goes further, by repurposing the on-die ECC data to provide better controller-level error correction with a long codeword, at the cost of slightly longer transfers, possibly enabling probabilistic \chipkill or even \doublechipkill for DIMM configurations that would otherwise not support them.  For example, DUO suggests DDR4 $\times 4$ and DDR4 $\times 8$ modes with 18 and 9 devices, 34 and 68 bits per device, respectively, both with a total of $34\cdot 18 = 612$-bit codewords.  These use a $\GRS(2^8; 76,64)$ code which can correct up to 6 byte errors.  Or they can probabilistically correct a single-device error for DDR5, or even a double-device error for DDR4, by attempting decoding with each device (or pair of devices) erased.

As with Bamboo codes, URS codes can complement DUO, but not as effectively because there are not a whole number of symbols per device, which necessitates shortening the codes\footnote{Unless 17-bit symbols are used, but this is worse unless significant metadata is also desired, because it would have a $31\cdot17=527$-bit payload.}.  For example, we could use $\F_{2^9}$ and divide the 100 extra bits into 1 bit of metadata and 11 nine-bit parity symbols.  This would give an $\GRS(2^9; 72, 61)$ code with 4 or 8 symbols per device for DDR4 or DDR5, shortened from 72 or 36 bits to 68 or 34 bits per device, respectively.  In DDR4 this corrects a single-device plus one error efficiently and reliably; a double-device error with failure probability about $2^{-18} \approx 3.81\e{-6}$; or a single-device plus two-symbol error with failure probability about $2^{-9} \approx 0.2\%$.  With DDR5, it can correct a single-device failure except with probability about $2^{-27} \approx 7.45\e{-9}$.  These calculations are all based on analytically solving the key equation rather than iterating over each device.  We could fall back to the iterative solution if the analytic solution fails, which would be a nearly identical solution to DUO but with slightly higher efficiency due to the sparse erasure locator polynomial.

Bennett's \emph{signed parity codes} \cite{https://doi.org/10.48550/arxiv.2301.07271} provide a lightweight \bbck option.  These codes cannot correct double-DQ errors unless the two DQs are on a single device.  Signed parity codes devote one device to simple parity (i.e.\ the xor-sum of the other devices), and part of another device to additional syndrome bits called a \emph{signature}.  A single-device error can be corrected by checking, for each device, whether applying the syndrome's parity to that device would result in a matching signature.  This decoding algorithm is somewhat similar to the fully-unraveled codes in \cref{subsubsec:bbck} which also have a lightweight decoder, and after unraveling (being an interleaved \GRS code with $m_i=1$) also have a full column of simple parity symbols.  However, signed parity codes use a sparse signature matrix, so the decoder can have less area than a URS decoder.

When using beyond-bound correction, there are always ambiguous cases where the decoder must fail to find a unique correction.  URS codes have fewer ambiguous cases than generic linear codes, including signed parity codes.  This is because for any linear code with distance at least two devices, for any pair of devices $D_1 \neq D_2$, there are at least $q^{2\ell+K-N}-1$ possible errors in $D_1$ that alias to $D_2$.  So with a generic code, for each device there are about $(N/\ell-1)\cdot (q^{2\ell+K-N} - 1)$ ambiguous errors, since there are $N/\ell - 1$ other devices.  But with a URS code, there are only $q^{2\ell+K-N}-1$ possible ambiguous errors in a single device, and these same errors alias every other device.  Because there are fewer ambiguous errors, there is a lower probability that \bbck will fail to find a unique solution.  Additionally, URS codes are maximum distance separable in terms of symbols, which gives resiliency if errors may be localized.

\section{Conclusion and future work}

We have shown a way to \emph{unravel} certain Reed-Solomon codes into interleaved codes, to enable beyond-bound decoding using the interleaved codes.  The unraveling process is computationally efficient, and enables practical new beyond-bound decoding schemes for DRAM and other applications.

For future work, we hope to extend this work to other fields and interleaving orders that are not currently supported.  We would like to expand the work on stereotyped locators to cover other patterns with fewer parity symbols.  We also wonder if other codes can support an analog of unraveling, such as bit-oriented codes or codes for asymmetric channels.

\subsection{Acknowledgments}
Many thanks to Tanj Bennett for helpful discussion, and to anonymous reviewers for their feedback.

\subsection{Intellectual property disclosure}
Some of these techniques may be covered by US and/or international patents.

\bibliographystyle{IEEEtranS}
\bibliography{main}

\end{document}